\newcounter {ctr1}
\newtheorem{mythm}[ctr1]{Theorem}
\newtheorem{mylem}[ctr1]{Lemma}
\DeclareMathOperator{\newmod}{mod}
\DeclareMathOperator{\dom}{dom}
\renewcommand{\mathbf}[1]{{\pmb #1}}
\newcommand{\bE}{\mathbb{E}}
\newcommand{\bZ}{\mathbb{Z}}
\def\blackslug{\hbox{\hskip 1pt \vrule width 8pt height 8pt depth 0pt
\hskip 1pt}}
\def\bqed{\quad\blackslug\lower 8.5pt\null\par}
\def\wqed
\quad\raisebox{-.3ex}{\Large$\Box$}\lower 8.5pt\null\par}
\long\gdef\boxit#1{\begingroup\vbox{\hrule\hbox{\vrule\kern3pt
      \vbox{\kern3pt#1\kern3pt}\kern3pt\vrule}\hrule}\endgroup}
\newlength{\saveparindent}
\newlength{\saveparskip}
\newcounter{ctr}
\begin{document}

\title{Block Sensitivity of Minterm-Transitive Functions}

\author{
Andrew Drucker 
\\ MIT}

\maketitle

\begin{abstract}
Boolean functions with symmetry properties are interesting from a complexity theory perspective; extensive research has shown that these functions, if nonconstant, must have high `complexity' according to various measures.

In recent work of this type, Sun gave bounds on the block sensitivity of nonconstant Boolean functions invariant under a transitive permutation group.  Sun showed that all such functions satisfy $bs(f) = \Omega(N^{1/3})$, and that there exists such a function for which $bs(f) = O(N^{3/7}\ln N)$.  His example function belongs to a subclass of transitively invariant functions called the \emph{minterm-transitive} functions (defined in earlier work by Chakraborty).  

We extend these results in two ways.  First, we show that nonconstant minterm-transitive functions satisfy $bs(f) = \Omega(N^{3/7})$.  Thus Sun's example function has nearly minimal block sensitivity for this subclass.  Second, we give an improved example: a minterm-transitive function for which $bs(f) = O(N^{3/7}\ln^{1/7}N)$.

\end{abstract}

\section{Introduction}

Boolean functions, like other objects in mathematics, can be classified according to the symmetries they possess.  A natural notion of symmetry arises when we consider permutations of the input variables.  Given a function $f: \{0,1\}^N \rightarrow \{0, 1\}$ and a permutation $\sigma$ on $\{1, \ldots, N\}$, we say that $f$ is invariant under $\sigma$ if permuting the input variables according to $\sigma$ never affects the value of $f$.  For every function $f$ it is easily seen that the set of permutations under which $f$ is invariant forms a group, the \emph{invariance group} of $f$.

One class of `high-symmetry' functions are those whose invariance group is \emph{transitive}: a permutation group $\Gamma$ is transitive if for each $i, j \in [N]$ there is a $\pi \in \Gamma$ such that $\pi(i)= j$.  Transitively invariant Boolean functions (also called `weakly symmetric functions') are a natural, important class which includes graph properties and symmetric functions.  They are of particular interest in computational complexity: several decades of research have shown that certain classes of (nonconstant) transitively invariant Boolean functions have high `complexity' in several senses.  
For example, symmetric functions on $N$ inputs have randomized query complexity $\Omega(N)$, quantum query complexity $\Omega(\sqrt{N})$ \cite{BBC+}, and sensitivity $\Omega(N)$, while graph properties on $n$-vertex graphs have deterministic query complexity $\Omega(n)$, quantum query complexity $\Omega(n^{1/2})$~\cite{SYZ}, and sensitivity $\Omega(n)$~\cite{Tur}.  In each case the lower bound obtained is tight (except for a log-factor uncertainty in the case of~\cite{SYZ}).

For general transitively invariant functions, the deterministic and quantum query complexities have also been pinpointed fairly precisely \cite{SYZ}; however, the sensitivity and block sensitivity of these functions are less well understood.  In particular, it is open whether such functions have sensitivity $s(f) = N^{\Omega(1)}$.  A version of this question was first asked by Turan in 1984~\cite{Tur}, who gave an affirmative answer for the case of graph properties.

Partial progress on Turan's question was made by Chakraborty, who in~\cite{Cha} defined a special class of transitively invariant functions called \textit{minterm-transitive functions} (see Section~\ref{ss2} for the definition).  These functions are of interest because although they are of restricted form, they place no restriction on the type of transitive invariance group associated with the Boolean function (in contrast to graph properties and symmetric functions).  Chakraborty showed that for such functions $s(f) = \Omega(N^{1/3})$, and he also constructed an example for which this bound is tight.  This is the lowest sensitivity known for any transitively invariant function.  

In subsequent work, Sun~\cite{Sun} showed that for general transitively invariant functions, the block sensitivity $bs(f)$ satisfies $bs(f) = \Omega(N^{1/3})$.  Sun also gave an example of a transitively invariant (in fact minterm-transitive) function for which $bs(f) = O(N^{3/7}\ln N)$.

In this paper, we extend Sun's results in two directions.  First, we show that for minterm-transitive functions, $bs(f) = \Omega(N^{3/7})$.  While this does not close the gap in our knowledge for general transitively invariant functions, it in a sense explains why Sun's upper bound took the form it did.  To prove this result (in Section~\ref{lboundsec}), we build on Sun's approach (which is also related to ideas in~\cite{Cha}, \cite{RV}) of selecting random permutations from the invariance group for $f$ to find disjoint sensitive blocks.  In a novel step, we use the `deletion method' of probabilistic combinatorics \cite{AS} to create a large collection of sensitive blocks with `low overlap'; we then apply a simple method for passing from an input with many, low-overlap sensitive blocks to an input with many disjoint sensitive blocks.

Next, we improve Sun's upper-bound example, by giving (in Section~\ref{uboundsec}) a family of minterm-transitive functions for which $bs(f) = O(N^{3/7}\ln^{1/7} N)$.  Our basic approach is the same as Sun's \cite{Sun}, but we improve part of the construction, using a powerful inequality from probability theory due to Janson and Suen \cite{Jan}.  We introduce this inequality in Section~\ref{probineqsec}.

\section{Preliminaries\label{prelim}}

For convenience, in this paper we will always regard an $N$-bit string as having coordinates indexed by $\bZ_N = \{0,1 , \ldots , N - 1\}$.

\subsection{Sensitivity and block sensitivity}

Given a string $x \in \{0, 1\}^N$ and a set $B \subseteq \bZ_N$ (also referred to as a `block'), define $x^{B}$ as the string whose $i$th bit is $\overline{x_i}$ if $i \in B$, and $x_i$ otherwise.  In particular, let $x^i$ denote the string $x$ with its $i$th bit flipped.

For any Boolean function $f: \{0, 1\}^N \rightarrow \{0, 1\}$ and $x \in \{0, 1\}^N$, say that $B \subseteq \bZ_N$ is a \textit{sensitive block for $x$} if $f(x^{B}) \neq f(x)$.  Define $bs(f; x)$ as the largest $d$ for which there exists $d$ disjoint sensitive blocks $B_1, \ldots, B_d \subseteq \bZ_N$ for $x$.

For $b \in \{0, 1\}$, define the \textit{$b$-block sensitivity} of $f$,  or $bs_b(f)$, as $\max_{x \in f^{-1}(b)}bs(f; x)$.  Define the \textit{block sensitivity} $bs(f) = \max(bs_0(f), bs_1(f))$.   Block sensitivity was first defined by Nisan in~\cite{Nis}.

The \textit{sensitivity} of $f$, denoted $s(f)$, is defined identically to $bs(f)$, but with the further restriction that all sensitive blocks considered must be of size $1$ (thus $s(f) \leq bs(f)$).  Sensitivity, a concept which predates block sensitivity, was defined in~\cite{CDR} (and originally called `critical complexity').

\subsection{Patterns, permutations, and invariance}\label{ss2}

Define a \textit{pattern} as a string $p \in \{0, 1, *\}^N$ (note, this definition includes the ordinary strings $x \in \{0, 1\}^N$), and define the \textit{domain} of $p$, denoted $\dom(p)$, as $\dom(p) = \{i: p_i \in \{0, 1\}\}$.  We say that $p$ is \emph{defined on $i$} if $i \in \dom(p)$.  Say that two patterns $p, p'$ \textit{agree} if for all $i \in \bZ_N$, $p_i \in \{0, 1\} \Rightarrow p'_i \in \{p_i, *\}$.  Note that this condition is symmetric in $p, p'$.

For a pattern $p$  and a permutation $\sigma$ from the symmetric group $S_N$ (considered as the group of permutations on $\bZ_N$), define the \textit{$\sigma$-shift of $p$}, denoted $\sigma(p)$, by the rule $\sigma(p)_i = p_{\sigma^{-1}(i)}$.  Similarly, for subsets $B \subseteq \bZ_N$, define the $\sigma$-shifted set $\sigma(B) = \{\sigma(b): b \in B\}$.

Given a permutation group $\Gamma \leq S_N$, we say a Boolean function $f$ is \emph{invariant under $\Gamma$} if for all $x \in \{0, 1\}^N$ and $\sigma \in \Gamma, f(x) = f(\sigma(x))$.  

A permutation group $\Gamma$ is called \textit{transitive} if for all $i, j \in \bZ_N$ there exists  $\sigma \in \Gamma$ such that $\sigma(i) = j$.  An important example of a transitive permutation group is the family of cyclic shifts of the coordinates, which we'll denote by $\mathcal{T} = \{t_j$: $t_j(i) = i + j \newmod N\}_{j \in \bZ_N}$.  We say a Boolean function $f$ is \textit{transitively invariant} (or \textit{weakly symmetric}, in~\cite{Sun}) if it is invariant under some transitive group $\Gamma$.  We say $f$ is  \textit{cyclically invariant} if it is invariant under $\mathcal{T}$.

Given a pattern $p$ and a permutation group $\Gamma \leq S_N$, define the \textit{$(\Gamma, p)$-pattern matching problem} $f^{\Gamma, p}: \{0, 1\}^n \rightarrow \{0, 1\}$ by

\[ f^{\Gamma, p}(x) = 1 \Leftrightarrow \exists \sigma \in \Gamma: x \text{ agrees with } \sigma(p).      \]
Equivalently, $f^{\Gamma, p}(x) = 1 \Leftrightarrow \exists \sigma \in \Gamma$ such that $\sigma(x)$ agrees with $p$.  A function $f : \{0, 1\}^n \rightarrow \{0, 1\}$ is called \textit{minterm-transitive} if there exists a transitive group $\Gamma$ and pattern $p$ such that $f = f^{\Gamma, p}$.  $f$ is called \textit{minterm-cyclic} if in addition we may take $\Gamma = \mathcal{T}$.  Note that transitive pattern-matching functions are transitively invariant, and minterm-cyclic functions are cyclically invariant.  (Both of these subclasses were defined in~\cite{Cha}, where the terminology is explained.)

\subsection{A probabilistic inequality}\label{probineqsec}

The key tool in our construction of a minterm-transitive function with low block sensitivity is a probabilistic inequality from a paper of Janson~\cite{Jan}.  This inequality reformulates an earlier result of Suen~\cite{Suen}, which in turn generalizes another, earlier result of Janson (see~\cite{Jan} and Alon and Spencer's book \cite[Sec. 8.7]{AS} for more details).  Roughly speaking, the inequality upper-bounds the probability that a family of indicator random variables sums to zero, provided the expected value of their sum is large enough and they are `mostly independent'.  We set up and state this inequality (which will be used only in Section~\ref{uboundsec}) next.

Let $\{I_i\}_{i \in \mathcal{I}}$ be a finite family of indicator (i.e., 0/1-valued) random variables on some probability space $\Omega$.  Let $G$ be an (undirected) \textit{dependency graph} with vertex set $\mathcal{I}$ (and edges indicated by $\sim$, and with $i \nsim i$ for all $i$).  This means that, if $A, B$ are disjoint sets in $\mathcal{I}$ and $i \nsim j$ for each pair $(i, j) \in A \times B$, then the family $\{I_i\}_{i \in A}$ must be independent of the family $\{I_j\}_{j \in B}$.

For $i \in \mathcal{I}$, let $q_i := \bE[I_i]$, and let $\mu := \bE[\sum_{i \in \mathcal{I}} I_i] = \sum_{i \in \mathcal{I}}q_i$.
Let $\delta_i := \sum_{j: i \sim j}q_j$.  Let $\delta := \max_{i}\delta_i$. and $\Delta := \sum_{\{i, j\}: i \sim j} \bE[I_i I_j]$, where the sum is over unordered pairs.  Observe that $\delta$ and $\Delta$ measure in a sense the `level of dependence' among the family.  Then the promised inequality is as follows:

\begin{mythm} \label{SuenJan} \cite{Jan} $\Pr[\sum_{i \in \mathcal{I}}I_i = 0] \leq e^{-\mu + \Delta e^{2\delta}}$.
\end{mythm}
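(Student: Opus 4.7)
The plan is to prove the inequality via a telescoping product of conditional probabilities, following Janson's approach. Fix an arbitrary ordering $\mathcal{I} = \{1, \ldots, n\}$ of the indicators and write
\[ \Pr\Big[\sum_{i} I_i = 0\Big] \;=\; \prod_{i=1}^n \Pr[I_i = 0 \mid I_1 = \cdots = I_{i-1} = 0]. \]
For each $i$, partition the earlier indices into $R_i = \{j < i : j \nsim i\}$ and $D_i = \{j < i : j \sim i\}$. Since $I_i$ is independent of $\{I_j\}_{j \in R_i}$, only the conditioning on the $D_i$ events can perturb the event $\{I_i = 1\}$ away from its marginal probability $q_i$. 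So the goal of the proof reduces to a purely local claim: for each $i$,
\[ \Pr[I_i = 0 \mid (I_j = 0)_{j < i}] \;\leq\; \exp\!\Big(-q_i + e^{2\delta_i} \sum_{j \in D_i} \bE[I_i I_j]\Big). \]
Multiplying these bounds over $i$ and using $\sum_i \sum_{j \in D_i} \bE[I_i I_j] = \Delta$ and $\delta_i \leq \delta$ yields the theorem.

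To establish the per-index bound, I would use a switching identity. Write $P(S) := \Pr[(I_j = 0)_{j \in S} \mid (I_k = 0)_{k \in R_i}]$ for $S \subseteq D_i \cup \{i\}$; by independence from $R_i$ we have $P(\{i\}) = 1 - q_i \leq e^{-q_i}$. The key manipulation is
\[ \Pr[I_i = 0 \mid (I_j = 0)_{j < i}] \;=\; \frac{P(D_i \cup \{i\})}{P(D_i)} \;=\; \frac{P(\{i\}) - \Pr[I_i = 0, \exists j \in D_i : I_j = 1 \mid R_i]}{P(D_i)}, \]
and then one expands $\Pr[I_i = 0, I_j = 1, \ldots]$ by iteratively swapping ``$I_k = 0$'' conditions for ``$I_k = 1$'' corrections. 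Each such swap picks up a factor $\bE[I_i I_k]$ (up to conditioning), and the resulting series over subsets of $D_i$ can be dominated by a geometric sum in $\delta_i$, producing the factor $e^{2\delta_i}$.

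The main obstacle is bookkeeping in this inclusion–exclusion expansion: conditional probabilities can swing in either direction under the dependency graph, so a naive bound gives only $\Delta$ in the exponent rather than $\Delta e^{2\delta}$. The trick is to organize the expansion so that each correction term is matched against a pair $(i,j)$ with $i \sim j$, and the accumulated ``penalty'' from re-conditioning is uniformly absorbed into the geometric factor $e^{2\delta_i}$. Once the local inequality is in hand, the global bound falls out by a single line of multiplication. Since this is a well-established tool from probabilistic combinatorics and the remainder of the paper only uses the theorem as a black box, I would in practice simply cite \cite{Jan} rather than reproducing the argument in full.
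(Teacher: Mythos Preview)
The paper does not prove this theorem at all: it is stated as a black-box citation to \cite{Jan}, with no proof or sketch provided. Your final sentence---simply cite \cite{Jan}---is therefore exactly what the paper does, and your preceding sketch, while a reasonable outline of Janson's argument, goes beyond anything the paper attempts.
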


\section{Lower Bounds on $bs(f)$ for Minterm-Transitive Functions \label{lboundsec}}

In this section we prove:

\begin{mythm}\label{lbound}  If $f: \{0, 1\}^N \rightarrow \{0, 1\}$ is a nonconstant minterm-transitive function, then $bs(f) = \Omega(N^{3/7})$.  \end{mythm}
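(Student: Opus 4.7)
Write $f = f^{\Gamma, p}$ with $p$ having $a$ ones and $b$ zeros in a domain of size $k = a + b$. The bitwise-complement duality $f^{\Gamma, \bar p}(x) = f^{\Gamma, p}(\bar x)$ yields $bs(f^{\Gamma, \bar p}) = bs(f^{\Gamma, p})$ and $s(f^{\Gamma, \bar p}) = s(f^{\Gamma, p})$, so the roles of $a$ and $b$ are interchangeable. Write $B_\sigma := 1_{\sigma(p)}$ for the 1-support of $\sigma(p)$; each $|B_\sigma| = a$.

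\emph{Sensitivity reduction.} The first ingredient is a ``free'' sensitivity bound $s(f) \ge \max(a, b)$. Assuming WLOG $a \ge b$, I would consider the input $y^* := 1_{B_{\sigma_0}}$ for some fixed $\sigma_0 \in \Gamma$. Clearly $f(y^*) = 1$ via $\sigma_0$, and for every $u \in B_{\sigma_0}$, flipping $u$ breaks the $\sigma_0$-match; no other shift $\tau$ can match $y^{*\{u\}}$, since that would require $B_\tau \subseteq B_{\sigma_0} \setminus \{u\}$---impossible since $|B_\tau| = a > a-1$. Thus $s(f) \ge a = \max(a, b)$. In particular, if $\max(a,b) \ge N^{3/7}$ (equivalently $k \ge 2 N^{3/7}$), then $bs(f) \ge s(f) \ge N^{3/7}$, and the theorem holds in this case.

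\emph{Block-sensitivity argument for small $k$.} When $k < 2N^{3/7}$ (so in particular $a < N^{3/7}$), I would work at the input $x = 0^N$, where $f(x) = 0$ and the minimal sensitive blocks are exactly $\{B_\sigma\}_{\sigma \in \Gamma}$, each of size $a$, and then construct $\Omega(N^{3/7})$ disjoint sensitive blocks at a carefully modified input. Following Sun's approach of selecting random permutations from the invariance group, I would first sample $m$ uniformly random shifts $\sigma_1, \ldots, \sigma_m \in \Gamma$; by transitivity, $\bE[|B_{\sigma_i} \cap B_{\sigma_j}|] = a^2/N$. The new ingredient is an application of the Alon--Spencer deletion method: for a chosen threshold $c^*$, delete any shift involved in a pair with overlap exceeding $c^*$. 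By Markov's inequality the expected number of such bad pairs is $O(m^2 a^2/(Nc^*))$, and tuning $m$ retains a subfamily of $m' \gtrsim Nc^*/a^2$ shifts with all pairwise $B_\sigma$-overlaps at most $c^*$. Finally I would convert these overlapping sensitive blocks into disjoint ones at a modified input $x^*$: arrange $x^*$ to partially match each surviving $\sigma_i$, agreeing with $\sigma_i(p)$ on $B_{\sigma_i}$ except on a small ``slack'' set $M_i \subset B_{\sigma_i}$. The low pairwise overlap permits the $M_i$'s to be chosen pairwise disjoint, and each $M_i$ is then a sensitive block at $x^*$: flipping $M_i$ completes the $\sigma_i$-match, so $f((x^*)^{M_i}) = 1$, provided one also has $f(x^*) = 0$.

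\emph{Main obstacle.} The hard part will be verifying $f(x^*) = 0$ in the conversion---i.e., that no shift $\tau$ outside the surviving collection accidentally matches $x^*$. A naive sufficient condition (say, that the union $I := \bigcup_{i<j}(B_{\sigma_i} \cap B_{\sigma_j})$ of pairwise overlaps, if used as the 1-set of $x^*$, has size less than $a$) yields only the weaker $\Omega(N^{1/3})$ bound. To reach $\Omega(N^{3/7})$ one must sharpen the control on unintended matches---probabilistically bounding the expected number of shifts $\tau \notin \{\sigma_1, \ldots, \sigma_{m'}\}$ whose $B_\tau$ fits inside the 1-support of $x^*$, possibly via a second application of the deletion method to eliminate such bad $\tau$. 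The exponent $3/7$ should emerge from the simultaneous balancing of three constraints: (I) deletion-method subfamily size $m' \sim Nc^*/a^2$; (II) disjointness and nonemptiness of the $M_i$'s, which force $m' c^* \lesssim a$; and (III) the no-unintended-match condition. Optimizing $c^*$ (and $m$) against all three constraints is what I expect to be the most delicate part of the argument.
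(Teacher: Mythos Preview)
Your overall architecture matches the paper's---an easy case when $\dom(p)$ is large, then random shifts plus the deletion method when it is small---but the obstacle you flag is real, and the way you propose to attack it is not how the paper resolves it. Two concrete points. First, the paper applies the deletion method to control a different quantity: not the pairwise overlap $|B_{\sigma_i}\cap B_{\sigma_j}|$, but the \emph{per-index multiplicity}, arranging that every coordinate lies in at most three of the shifted domains $\sigma_j(B)$ (their Lemma~\ref{nicepack}). This is what makes the rest of the argument go through: with multiplicity $\le 3$, each overlap index can be assigned a ``majority'' value $v_i$ that at most one of the covering patterns disagrees with. Your pairwise-overlap parameter $c^*$ does not give you such a pointwise majority, which is why your constraints (I)--(III) do not cleanly balance to $3/7$.

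Second, and this is the idea you are missing entirely, the paper never tries to build $x^*$ in one shot and then verify $f(x^*)=0$. Instead it starts from any $x$ with $f(x)=0$ and \emph{greedily} flips overlap bits toward their majority value $v_i$, but only when the flip keeps $f(x)=0$. This guarantees $f=0$ at termination for free. The bits that could not be flipped (``stubborn'' indices) are then individually sensitive, so either there are $\gtrsim N^{3/7}$ of them and you win on sensitivity, or there are few, in which case most of the $\sigma_j(B)$ are stubborn-free. For a stubborn-free $B_j$, the disagreement set $D_j=\{i:\sigma_j(p)_i\neq x_i\}$ is a sensitive block, and the majority property forces the $D_j$'s to be pairwise disjoint (at any non-stubborn overlap index, $x$ equals $v_i$, so at most one pattern disagrees there). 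This win--win case split is precisely what replaces your proposed ``second deletion round'' to rule out unintended matches; without it your plan does not close.
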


We will need the following easy observation,  essentially due to \cite{RV}:

\begin{mylem}\label{uniform} \cite{RV} If $\Gamma \subseteq S_N$ is a transitive group of permutations, $i \in \bZ_N$ is any index, and $\sigma$ is a uniformly chosen element of $\Gamma$, then $\sigma(i)$ is uniformly distributed over $\bZ_N$. \qed
\end{mylem}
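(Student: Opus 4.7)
The plan is to prove this by a short orbit--stabilizer argument. Fix $i \in \bZ_N$ and consider the stabilizer subgroup $\Gamma_i := \{\sigma \in \Gamma : \sigma(i) = i\} \leq \Gamma$. I will show that for each $j \in \bZ_N$ the preimage $S_j := \{\sigma \in \Gamma : \sigma(i) = j\}$ is a left coset of $\Gamma_i$, and therefore all of these preimages have the same cardinality.

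To carry this out, first I would invoke transitivity of $\Gamma$ to pick, for each $j \in \bZ_N$, some $\tau_j \in \Gamma$ with $\tau_j(i) = j$. Then for any $\sigma \in \Gamma$, $\sigma(i) = j$ if and only if $(\tau_j^{-1}\sigma)(i) = i$, i.e., if and only if $\sigma \in \tau_j \Gamma_i$. Hence $S_j = \tau_j \Gamma_i$, a left coset of $\Gamma_i$ in $\Gamma$, so $|S_j| = |\Gamma_i|$. Since the $S_j$'s partition $\Gamma$ as $j$ ranges over $\bZ_N$, we get $|\Gamma| = N \cdot |\Gamma_i|$, and consequently
\[
\Pr[\sigma(i) = j] \;=\; \frac{|S_j|}{|\Gamma|} \;=\; \frac{|\Gamma_i|}{N \cdot |\Gamma_i|} \;=\; \frac{1}{N}
\]
for every $j$, giving the uniform distribution claimed.

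There is no real obstacle here: the statement is essentially the orbit--stabilizer theorem combined with the observation that uniform sampling from a group induces the uniform distribution on any single coset space. The only thing to be a bit careful about is writing the coset identification in the right direction (left vs.\ right), which I handled above by expressing the condition $\sigma(i) = j$ as membership in $\tau_j \Gamma_i$.
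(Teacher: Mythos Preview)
Your proof is correct: the orbit--stabilizer argument you give is the standard way to establish this fact, and every step is sound. The paper itself does not provide a proof at all---it simply states the lemma with a \qed\ and attributes the observation to~\cite{RV}---so your argument fills in exactly the details the paper elides.
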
 

Our main new tool is the following combinatorial lemma.

\begin{mylem}\label{nicepack}  Let $B \subseteq \bZ_N$ be of size at most $N^{3/7}$, and let $\Gamma \leq S_N$ be a transitive permutation group.  If $N$ is sufficiently large, there exists a $T  \geq  \frac{1}{2}N^{3/7}$ and group elements $\Sigma = \{\sigma_1, \ldots, \sigma_T\} \subseteq \Gamma$ such that for each $i \in \bZ_N$, there are at most $3$ indices $j \leq T$ for which $i \in \sigma_j(B)$.
\end{mylem}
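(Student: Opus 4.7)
The plan is to apply the deletion method as hinted in the introduction. I would select $T_0 \approx N^{3/7}$ group elements $\sigma_1, \ldots, \sigma_{T_0}$ independently and uniformly from $\Gamma$. By Lemma \ref{uniform} applied to $\sigma^{-1}$ (which is uniform on $\Gamma$ whenever $\sigma$ is, since $\Gamma$ is a group), for any fixed $i \in \bZ_N$ and fixed $j$ the probability that $i \in \sigma_j(B)$ equals $|B|/N$, because the events $\{\sigma_j(b) = i\}$ for $b \in B$ are disjoint and each has probability $1/N$.

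Next I would bound the number of ``bad'' configurations, namely pairs $(i, S)$ with $i \in \bZ_N$ and $S \subseteq [T_0]$, $|S|=4$, such that $i \in \bigcap_{j \in S} \sigma_j(B)$. By independence across the $\sigma_j$'s, each such configuration occurs with probability $(|B|/N)^4$, so the expected number of bad configurations is at most
\[
N \binom{T_0}{4}\left(\frac{|B|}{N}\right)^4 \;\leq\; \frac{T_0^4 |B|^4}{24\, N^3}.
\]
Plugging in $T_0 = \lceil N^{3/7}\rceil$ and $|B| \leq N^{3/7}$, this is at most $\tfrac{1}{24}N^{3/7}(1+o(1))$. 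Hence there exists a concrete outcome $\sigma_1, \ldots, \sigma_{T_0}$ for which the number of bad 4-configurations is at most, say, $N^{3/7}/20$ once $N$ is large.

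Fix such an outcome and, for each bad configuration $(i, S)$, remove one (arbitrary) permutation $\sigma_j$ with $j \in S$ from the collection. This removes at most $N^{3/7}/20$ permutations in total, so the remaining set $\Sigma$ has size $T \geq T_0 - N^{3/7}/20 \geq \tfrac{1}{2}N^{3/7}$. Moreover, if some $i \in \bZ_N$ were covered by 4 or more of the retained $\sigma_j(B)$, these 4 indices would form a bad 4-configuration that the deletion step was supposed to destroy, contradicting our construction. Thus each $i$ is covered at most 3 times, completing the lemma.

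The only delicate point is balancing the parameters: taking $T_0$ much larger than $N^{3/7}$ would make the expected number of bad 4-configurations grow faster than $T_0$, so the deletions could swamp the collection; the exponent $3/7$ is precisely the threshold at which $T_0^4 |B|^4/N^3 \approx T_0$ when $|B| \approx N^{3/7}$. I expect the bookkeeping to stay elementary, with the ``hard'' part being just the choice to apply Markov to the count of 4-fold intersections rather than to individual overloaded coordinates (which would give a weaker bound).
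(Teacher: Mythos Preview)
Your argument is correct and follows the same high-level \emph{random selection plus deletion} strategy as the paper, but your deletion step is genuinely cleaner. The paper counts \emph{bad indices} (those covered by at least four of the $\sigma_j(B)$) and then, to control how many permutations must be deleted per bad index, separately introduces a second threshold of \emph{terrible indices} (those covered at least seven times), argues that with positive probability no terrible index occurs, and only then deletes all $\leq 6$ permutations through each bad index. You instead count \emph{bad $4$-configurations} $(i,S)$ directly and delete one permutation per configuration; since any surviving $4$-fold overlap would itself be an untouched bad configuration, you are done after a single expectation bound. This sidesteps the two-tier bad/terrible bookkeeping entirely, at no cost in the constants. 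The one cosmetic point: in your final paragraph you mention ``applying Markov,'' but in the body you (correctly) just use the first-moment method to pick an outcome achieving at most the expectation; no Markov step is actually needed.
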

(Note that there is no requirement that the $\sigma_j$ all be distinct.)
\begin{proof}[Proof of Lemma~\ref{nicepack}]  Our strategy is as follows: first we select $T_0$ permutations $\sigma_j$ independently at random from $\Gamma$, where $T_0 := \lceil N^{3/7} \rceil$.  Some indices $i$ will be contained in $4$ or more of the shifted sets $\sigma_j(B)$, but we argue that with nonzero probability, we can discard at most $\frac{1}{2} N^{3/7}$ of the permutations in our collection to `fix' every such index $i$.

So let $\sigma_1, \ldots, \sigma_{T_0}$ be independent and uniform from $\Gamma$.  For each $i \in \bZ_N$, say $i$ is `bad' if $i \in \sigma_j(B)$ for at least 4 trials $j \leq T_0$.  We upper-bound the probability that $i$ is bad.  First, for any fixed trial, Lemma~\ref{uniform} tells us that $\Pr[i \in \sigma_j(B)] = |B|/N$.  Independence of the trials implies that for any fixed $4$-tuple of distinct trials $(j_1, j_2, j_3, j_4) \in [T_0]$, the probability that $i$ is in the shifted set on each trial is $(|B|/N)^4$.  Then by a union bound,
$$
\Pr[i \text{ is bad}] \leq {T_0 \choose 4}\frac{|B|^4}{N^4} < \frac{(T_0 |B|)^4}{24 N^4}
\leq \frac{((N^{3/7} + 1) N^{3/7})^4}{24 N^4} < \frac{N^{-4/7}}{23},
$$
the last step holding if $N$ is sufficiently large.  Summing over all $i \in \bZ_N$, the \textit{expected} number of bad $i$'s is less than $N^{3/7}/23$.  By Markov's bound, the probability that there are $N^{3/7}/12$ bad indices is less than $1/2$.

Now say that $i \in  \bZ_N$ is `terrible' if $i \in \sigma_j(B)$ for at least $7$ indices $j \leq T_0$.  By reasoning similar to the above, the expected number of terrible indices is at most 
$$
N{T_0 \choose 7} \left(\frac{|B|}{N}  \right)^7 < N \cdot \frac{(N^{-1/7})^7}{7!  - 1} < 1/2
$$
for sufficiently large $N$.  So the probability that \emph{any} terrible index appears is less than 1/2, and we find that with positive probability there are no terrible indices and fewer than $N^{3/7}/12$ bad indices.

We take any such outcome (specified by a sequence $\sigma_1, \ldots, \sigma_{T_0}$) and for each bad index $i$, delete from the collection all permutations $\sigma_j$ such that $i \in \sigma_j(B)$.  Since there are no terrible indices, each such deletion removes at most $6$ permutations, so the total number of permutations deleted is less than $6\cdot \left(N^{3/7}\right / 12) =  N^{3/7}/2$. The remaining collection has size greater than $N^{3/7}/2$ and satisfies the Lemma's conclusion.
\end{proof}

\begin{proof}[Proof of Theorem~\ref{lbound}]  Take any nonconstant minterm-transitive function $f = f^{\Gamma, p}: \{0, 1\}^N \rightarrow \{0, 1\}$, where $\Gamma$ is a transitive group and $p$ a pattern.  Let $B = \{i: p_i \in \{0, 1\}\}$.  Without loss of generality we may assume that $p$ contains at least $|B|/2$ 1's.  Then if we let $x \in \{0, 1\}^N$ agree with $p$ and equal 0 where $p$ is undefined, we see that $f(x) = 1$, while $f(x^{i}) = 0$ for any $i$ such that $p_i = 1$.  

Thus $bs(f) \geq bs(f; x) \geq |B|/2$.  If $|B| > N^{3/7}$, then $bs(f) > N^{3/7}/2$.  Let us assume now that $|B| \leq N^{3/7}$.  In this case, Lemma~\ref{nicepack} applies to $B$: there exist group elements $\Sigma = \{\sigma_1, \ldots, \sigma_T\} \subseteq \Gamma$ (with $T \geq \frac{1}{2}N^{3/7}$) satisfying the Lemma's conclusions.  Let $\Sigma(p) = \{\sigma_j(p): \sigma_j \in \Sigma\}$ denote our distinguished set of shifted patterns, and let $\mathcal{B}_{\Sigma} = \{B_j = \sigma_j(B)\}$ denote the corresponding collection of domains.

Consider the set $U \subseteq \bZ_N$ of indices $i$ appearing in at least two sets $B_j \in \mathcal{B}_{\Sigma}$.  At most three patterns $\sigma_j(p) \in \Sigma(p)$ from our collection are defined on any given index, so for each $i \in U$ we can select a value $v_i \in \{0, 1\}$ such that \textit{at most} one $\sigma_j(p)$ that is defined on $i$ disagrees with the setting $v_i$ there.  

Let us do the following:
\begin{itemize} 
\item Initialize $x \in \{0, 1\}^N$ to any value such that $f(x) = 0$.
\item If there exists some $i \in U$ such that $x_i \neq v_i$, and such that $f(x^{i}) = 0$, pick such an $i$ arbitrarily and set $x \leftarrow x^{i}$; otherwise halt.
\item Repeat the previous step until we halt.
\end{itemize} 
Note that $f(x) = 0$ for every value of $x$ during the algorithm's run.  Note also that the algorithm  must halt, since each step reduces the number of disagreements between $x$ and the $v_i$'s.  Now we ask the following question: looking at the final value of $x$ when the algorithm halts, how many $i \in U$ are such that $x_i$ still disagrees with $v_i$?  Call these indices `stubborn'.

First, suppose there are at least $N^{3/7}/12$ stubborn indices.  Since we halted, it must be the case that $f(x^{i}) = 1 \neq f(x)$ for each such stubborn index $i$, and thus $bs(f) \geq bs(f; x) \geq  N^{3/7}/12$.

On the other hand, suppose there are fewer than $N^{3/7}/12$ stubborn indices.  As each index $i \in \bZ_N$ appears in at most 3 sets from $\mathcal{B}_{\Sigma}$, fewer than $N^{3/7}/4$ patterns from $\Sigma(p)$ contain \textit{any} stubborn index.  If $B_j \in \mathcal{B}_{\Sigma}$ contains no stubborn indices, call it `stubborn-free'; so, there are more than $T - N^{3/7}/4 \geq N^{3/7}/4$ stubborn-free sets $B_j$.

For each $B_j \in \mathcal{B}_{\Sigma}$ define the `disagreement set' $D_j = \{i: \sigma_j(p)_i \in \{0, 1\} \wedge x_i \neq \sigma_j(p)_i\} \subseteq B_j$.  Each $D_j$ is nonempty, since $f(x) = 0$ and $f = f^{\Gamma, p}$.  Also, $f(x^{D_j}) = 1 \neq f(x)$.

Observe that if $B_j$ is stubborn-free, and $i \in D_j$, then $\sigma_j(p)$ is the \textit{only} pattern in $\Sigma(p)$ that disagrees with $x$ at $i$. Thus if $B_j, B_{j'}$ are stubborn-free, $D_j \cap D_{j'} = \emptyset$, so $bs(f; x)$ is at least the number of stubborn-free sets $B_j \in \mathcal{B}_{\Sigma}$, which we've seen is at least $N^{3/7}/4$.

Combining all of our cases, we find that $bs(f) = \Omega(N^{3/7})$. 
\end{proof}

\section{Improved Upper-Bound Example for $bs(f)$ \label{uboundsec}}

Sun~\cite{Sun} gave an example of  a minterm-cyclic function with block sensitivity $O(N^{3/7}\ln N)$.  This was the lowest block sensitivity known for any non-constant transitively invariant function.
In this section we prove the following result, improving on Sun's example:

\begin{mythm}\label{ubound}  There exist a family of nonconstant, minterm-transitive (in fact minterm-cyclic) functions $f_N: \{0, 1\}^N \rightarrow \{0, 1\}$, such that $bs(f_N) = O(N^{3/7}\ln^{1/7} N)$.  \end{mythm}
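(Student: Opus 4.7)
The plan is to construct $f_N = f^{\mathcal{T},p}$ for a suitably chosen (random) pattern $p$, following the same overall template as Sun's example, but replacing his union-bound analysis by a second-moment-style application of Theorem~\ref{SuenJan}, which trades the $\log N$ factor for $\log^{1/7}N$. Concretely, I will fix parameters $m = \Theta(N^{4/7})$ and $r = \Theta(N^{3/7}\log^{1/7}N)$, set $\dom(p) = \{0,1,\ldots,m-1\}$, choose a random $r$-subset of $\dom(p)$ to hold the $1$-bits, and set the remaining $m-r$ positions of $\dom(p)$ to $0$. Nonconstancy is immediate: there are at most $N \cdot 2^{N-m}$ inputs $x$ with $f(x)=1$, which is $o(2^N)$ for $m=\omega(\log N)$, while the input that agrees with $p$ on $\dom(p)$ satisfies $f(x)=1$.

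The bound on $bs_1(f_N)$ follows Sun's argument essentially unchanged: a sensitive block for a $1$-input $x$ (matched by shift $\sigma(p)$) must contain a $1$-bit of $\sigma(p)$ (or else flipping it could only create new matches), and one argues via the sparsity of $1$-bits and the scarcity of additional matching shifts that $bs_1(f_N) = O(r) = O(N^{3/7}\log^{1/7}N)$. The real work is in bounding $bs_0(f_N)$. If $x$ is a $0$-input with disjoint sensitive blocks $B_1,\ldots,B_L$, then each $B_j$ witnesses a shift $\sigma_j$ such that $\sigma_j(p)$ agrees with $x^{B_j}$; the associated disagreement sets $D_j \subseteq B_j$ are automatically pairwise disjoint. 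We want to show that over the random choice of $p$, the probability that there exist $L = \Theta(N^{3/7}\log^{1/7}N)$ such shifts is $o(2^{-N})$, so that we can union over $x$ and recover $bs_0(f_N)\le L$ deterministically.

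For fixed $x$ and a fixed ``configuration'' of candidate disjoint disagreement sets (one tentative set per shift index), let $I_i$ be the indicator that shift $t_i(p)$ actually realizes its assigned configuration entry; the sensitive packing exists iff $\sum_i I_i \ge L$ for some configuration. Sun's union bound over $i$ costs a factor $\log N$ to drive the probability below $2^{-N}$; I instead introduce a dependency graph on $\bZ_N$ with $i \sim i'$ whenever the pattern domains $t_i(\dom(p))$ and $t_{i'}(\dom(p))$ overlap nontrivially. Since $|\dom(p)| = m = o(N)$, each $i$ has only $O(m)$ neighbours, and after a careful computation of $q_i$, $\delta$, and $\Delta$ one finds $\delta = O(\mu m/N) = o(1)$ and $\Delta e^{2\delta} = O(\mu^2 m / N) \ll \mu$ once $\mu = \Theta(\log N)$. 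Theorem~\ref{SuenJan} then yields a tail bound of the form $e^{-\Omega(\mu)} = N^{-\Omega(1)}$ — an \emph{exponential} gain over the union bound — which when combined with $\mu = \Theta(\log N)$ allows $r$ (and hence $L$) to be shrunk by a factor $\log^{6/7}N$ relative to Sun's parameters.

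The main obstacle will be bookkeeping: one has to (i)~define the precise combinatorial notion of ``configuration'' so that $\sum I_i$ genuinely upper-bounds the block-sensitivity count while the indicators remain tractable; (ii)~bound the pair term $\Delta = \sum_{i\sim j}\bE[I_iI_j]$, which is where the overlap structure of cyclic shifts enters delicately, and show that it scales like $\mu^2 m/N$ rather than $\mu^2$; and (iii)~verify that the remaining union bound over $x \in \{0,1\}^N$ and over all choices of configuration (a much smaller space than the naive one, as distinct shifts' disagreement sets are forced to be disjoint and each of size $\le r$) is still beaten by the improved tail. Once these three accountings are calibrated with the correct log-exponent $\beta$ in the parameter choice, Theorem~\ref{ubound} follows by taking a $p$ attaining the high-probability event and appealing to the $bs_0$ and $bs_1$ bounds simultaneously.
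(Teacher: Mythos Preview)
Your plan has fundamental gaps in both halves. For $bs_1$: with $\dom(p) = \{0,\ldots,m-1\}$ of size $m = \Theta(N^{4/7})$, Lemma~\ref{1bound} only gives $bs_1(f) \leq m$, which is far too large. Your claimed refinement---that each sensitive block for a $1$-input must contain a $1$-bit of the matching shift $\sigma(p)$---is false: a single $0$-position of $\sigma(p)$ can be a sensitive block (flipping it destroys the match with $\sigma(p)$, and for a generic $x$ no other shift picks up the slack), so there is no reason $bs_1(f)$ should be $O(r)$ rather than $\Theta(m)$. For $bs_0$: Theorem~\ref{SuenJan} is an inequality for $\Pr\bigl[\sum_i I_i = 0\bigr]$; it certifies that the sum is \emph{likely to be positive}. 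You need the opposite direction, an upper-tail bound $\Pr\bigl[\sum_i I_i \geq L\bigr]$, and the theorem says nothing about that. Even setting this aside, a bound of order $e^{-\Omega(\mu)} = N^{-O(1)}$ with $\mu = \Theta(\log N)$ cannot possibly survive a union over all $2^N$ inputs $x$, so step~(iii) of your outline is hopeless as stated.

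The paper's argument inverts the roles entirely. The pattern $p$ is chosen with a \emph{sparse} random domain inside $\{0,\ldots,2K-2\}$ (each coordinate is independently $0$, $1$, or $*$, with $*$ by far the likeliest), so that $|\dom(p)| = O(N^{3/7}\ln^{1/7}N)$ directly and Lemma~\ref{1bound} already delivers the $bs_1$ bound. Theorem~\ref{SuenJan} is then applied in the correct direction: for each fixed $4$-set $A \subseteq \{0,\ldots,K-1\}$, with $I_i$ the indicator that the shift $A+i$ receives a balanced colouring from $p$, one obtains $\Pr\bigl[\sum_i I_i = 0\bigr] \leq K^{-4}$. The subsequent union bound is only over the ${K \choose 4}$ choices of $A$---polynomially many, not $2^N$---and yields a pattern $p$ containing a balanced shifted copy of \emph{every} such $A$. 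A deterministic pigeonhole argument (Lemma~\ref{0bound}) then converts this covering property into $bs_0(f) = O(N^{3/7}\ln^{1/7}N)$, with no union over inputs $x$ anywhere in sight.
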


Most steps of our proof follow the outline of Sun's, but for completeness we give a self-contained proof.
Before defining the $p$ we will use to define $f_N = f^{\mathcal{T}, p}$, we give two lemmas (both from \cite{Sun}) for upper-bounding the block sensitivity of such functions.

\begin{mylem}\label{1bound}\cite{Sun} For any $f = f^{\mathcal{T}, p},  bs_1(f)\leq |\dom (p)|$.
\end{mylem}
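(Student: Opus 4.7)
The plan is to fix any $x$ with $f(x) = 1$ together with a maximal collection of disjoint sensitive blocks $B_1, \ldots, B_d$ for $x$, and then pin all of them to a single ``certificate'' set of size $|\dom(p)|$ by a short pigeonhole argument.

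First, since $f(x) = 1$ and $f = f^{\mathcal{T}, p}$, by definition there exists a cyclic shift $t \in \mathcal{T}$ such that $x$ agrees with $t(p)$. From the definition of $\sigma$-shift we have $\dom(t(p)) = t(\dom(p))$, so in particular $|\dom(t(p))| = |\dom(p)|$. This shifted pattern $t(p)$ will play the role of the certificate for $x$.

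Next, for each block $B_k$ I would use sensitivity as follows: $f(x^{B_k}) = 0$, while $f(x) = 1$, so $x^{B_k}$ fails to agree with every shift of $p$, and in particular with $t(p)$. But $x$ did agree with $t(p)$ on every index of $\dom(t(p))$. Since flipping coordinates \emph{outside} $\dom(t(p))$ cannot destroy agreement (the pattern is $*$ there), the new disagreement must be witnessed by some index $i_k \in B_k \cap \dom(t(p))$.

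Finally, because the $B_k$ are pairwise disjoint, the witnesses $i_1, \ldots, i_d$ are distinct elements of $\dom(t(p))$, which yields $d \leq |\dom(t(p))| = |\dom(p)|$, and taking the max over all $x \in f^{-1}(1)$ gives $bs_1(f) \leq |\dom(p)|$. I do not anticipate any real obstacle here: the only thing that needs care is unwinding the ``agree'' relation on a pattern with $*$'s to confirm that indices outside $\dom(t(p))$ are irrelevant, which is immediate from the definition in Section~\ref{ss2}.
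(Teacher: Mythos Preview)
Your proof is correct and follows essentially the same argument as the paper's: pick a shift $t(p)$ certifying $f(x)=1$, observe that each sensitive block must contain an index of $\dom(t(p))$, and use disjointness to bound the number of blocks by $|\dom(t(p))|=|\dom(p)|$. The paper's version is just more terse.
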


\begin{proof}

Note if $f(x) = 1$ then some shift $t_{j_0}(p)$ of $p$ agrees with $x$.  So any collection of disjoint blocks $\{B_j\}$ for which $f(x^{B_j}) =0$ must assign a distinct coordinate from $\dom (t_{j_0}(p))$ to each $B_j$, and $|\dom(t_{j_0}(p))| = |\dom(p)|$.  This proves the Lemma. \end{proof}

Obtaining an upper bound on $bs_0(f)$ takes a bit more work.  We give some preparatory definitions.
By a \textit{4-set} in $\bZ_N$ we mean a subset of $\bZ_N$ of size 4.  If $A = \{a_1, \ldots, a_4\}$ is a 4-set, say that pattern $p$ \textit{contains a balanced shifted copy of $A$} if there exists a cyclic shift $t_j$ such that the shifted pattern $t_j(p)$ satisfies $\dom(t_j(p)) \supseteq A$, and $t_j(p)$ equals 0 on some two of the coordinates in $A$ and equals 1 on the other two.

\begin{mylem}\label{0bound}\cite{Sun} For any $f = f^{\mathcal{T}, p}$,  if $bs_0(f) \geq d$ then there exists a set $S \subseteq \bZ_N$ of size $d$, such that there is no 4-set $A \subseteq S$ for which $p$ contains a balanced shifted copy of $A$.
\end{mylem}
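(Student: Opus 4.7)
The plan is to take a witness $x \in f^{-1}(0)$ for $bs_0(f) \geq d$ with disjoint sensitive blocks $B_1, \ldots, B_d$, and distill from it the desired set $S$. For each $i$, since $f(x^{B_i}) = 1$ and $f = f^{\mathcal{T}, p}$, fix a cyclic shift $t_{j_i}$ such that $t_{j_i}(p)$ agrees with $x^{B_i}$. Replacing each $B_i$ by its essential disagreement set $\{k \in \dom(t_{j_i}(p)) : x_k \neq t_{j_i}(p)_k\}$ keeps it sensitive and nonempty (since $f(x) = 0$), and ensures $B_i \subseteq \dom(t_{j_i}(p))$. After this reduction the disjointness of the $B_i$'s forces the shifts $j_i$ to be pairwise distinct: two equal shifts would reduce to the same nonempty disagreement set.

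The natural candidate is $S = \{s_1, \ldots, s_d\}$ where $s_i \in B_i$ is a single representative per block; disjointness of the $B_i$'s ensures $|S| = d$. At each $s_i$ we have $x_{s_i} = \overline{t_{j_i}(p)_{s_i}}$, giving a rigid link between the representative choice and the local structure of $x$ and $p$. I plan to argue that the $s_i$'s can be chosen so that $S$ contains no 4-subset balanced under $p$.

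For the contradiction step, suppose some $A = \{s_{i_1}, \ldots, s_{i_4}\} \subseteq S$ is balanced under a shift $t_j$, with $t_j(p)$ taking value $0$ on two of $A$ and $1$ on the other two. Then the disagreement set $D := \{k \in \dom(t_j(p)) : x_k \neq t_j(p)_k\}$ is itself a nonempty sensitive block. Comparing $D$ to the existing $B_i$'s, and using the balanced structure of $t_j(p)$ on $A$ together with the pinned values of $x$ at each $s_{i_k}$, one aims to either (i) exhibit a $(d{+}1)$-st disjoint sensitive block, contradicting the choice of $d$, or (ii) identify one of the four indices $i_k$ where the representative $s_{i_k}$ can be swapped inside $B_{i_k}$ to destroy the bad 4-set without creating a new one. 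A probabilistic or alteration-style selection of the $s_i$'s upfront may be cleaner than iterating pointwise swaps.

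The main obstacle is controlling the interaction between $D$ and the $B_i$'s: a priori $D$ can overlap several existing blocks, and a naive swap might introduce new balanced 4-subsets elsewhere in $S$. Handling this cleanly --- whether via a potential function on the number of bad 4-subsets, a probabilistic alteration argument exploiting that each $B_i$ has at most $|\dom(p)|$ candidate representatives, or a direct combinatorial analysis leveraging the distinctness of the $j_i$'s --- is the crux of the proof.
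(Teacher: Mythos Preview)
Your setup through the distinctness of the shifts $j_i$ is exactly right, but then you choose the wrong set $S$. You take $S$ to consist of \emph{positions} --- one representative $s_i$ from each block $B_i$ --- and this is what generates the ``main obstacle'' you yourself flag as unresolved. The paper instead takes $S$ to be the set of (negated) \emph{shift indices}: $S := \{-j_i : 1 \leq i \leq d\}$. You already showed the $j_i$'s are pairwise distinct, so $|S| = d$ automatically, with no representatives to choose and no alteration step needed.

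With this $S$ the contradiction is immediate. Suppose $A = \{-j_{k_1}, -j_{k_2}, -j_{k_3}, -j_{k_4}\} \subseteq S$ admits a shift $j^*$ with $t_{j^*}(p)$ balanced on $A$, say $p_{-j_{k_i} - j^*}$ equal to $0$ for $i = 1,2$ and to $1$ for $i = 3,4$. Now evaluate each $x^{B_{k_i}}$ at the \emph{single} coordinate $-j^*$: since $x^{B_{k_i}}$ agrees with $t_{j_{k_i}}(p)$, we get $x^{B_{k_i}}_{-j^*} = p_{-j^* - j_{k_i}}$, which runs through $0,0,1,1$ as $i$ ranges over $1,2,3,4$. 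But $x_{-j^*}$ is a fixed bit, so exactly two of the four pairwise disjoint blocks $B_{k_1}, \ldots, B_{k_4}$ contain $-j^*$ --- contradiction.

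The idea you were missing is a swap of the roles of position and shift. Your balanced $4$-set of positions produces a new shift $j$ whose disagreement set $D$ can interact with the existing $B_i$'s in essentially arbitrary ways; there is no reason $D$ should be disjoint from them, and no evident potential that a representative-swap monotonically decreases. By contrast, a balanced $4$-set of \emph{shifts} pins down a single \emph{position} that must lie in two disjoint blocks. So the difficulty you identified is not a technicality to be patched by alteration or a probabilistic choice of the $s_i$'s; it is a signal that $S$ should be built from the $j_i$'s rather than from the $B_i$'s.
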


\begin{proof}  Say $bs_0(f) \geq d$; then there exists an $x$ and $d$ disjoint subsets $B_1, \ldots, B_d \subseteq \bZ_N$ such that $f(x^{B_k}) = 1 \neq f(x)$, for $1 \leq k \leq d$.  Thus for each $k$ there exists $j(k) \in \bZ_N$ such that $x^{B_k}$ agrees with $t_{j(k)}(p)$.  If $k \neq k'$ yet $j(k) = j(k')$, then both of $B_k, B_{k'}$ would contain each of the (nonempty set of) coordinates on which $t_{j(k)}(p)$ disagrees with $x$, contradicting disjointness; so the indices $j(1), \ldots, j(d)$ are all distinct.  Let $J = \{j(k)\}$ denote this index-set ($|J| = d$).

Let $S:= -J = \{-j: j \in J\}$ (all arithmetic in this section is mod $N$).  We claim that if $A$ is any 4-set contained in $S$, then $p$ contains no balanced shifted copy of $A$.  For suppose it did; that is, suppose there exists some $j^* \in \bZ_N$ such that distinct indices $-j(k_1), -j(k_2), -j(k_3), -j(k_4)$ are in the domain of $t_{j^*}(p)$, and that (renaming the $k_i$'s if necessary) $t_{j^*}(p)_{-j(k_1)} = t_{j^*}(p)_{-j(k_2)} = 0$ while $t_{j^*}(p)_{-j(k_3)} = t_{j^*}(p)_{-j(k_4)} = 1$.  Equivalently, $p_{-j(k_1) - j^*} = p_{-j(k_2) - j^*} = 0$ and $p_{-j(k_3) - j^*} = p_{-j(k_4) - j^*} = 1$.   

Now for $i \in [4]$, recall that $x^{B_{k_i}}$ agrees with $t_{j(k_i)}(p)$.  In particular, $x^{B_{k_i}}_{-j^*} = t_{j(k_i)}(p)_{-j^*}$, i.e.,
$$
x^{B_{k_i}}_{-j^*} = p_{-j(k_i) - j^*}.
$$  
We have seen that for $i = 1, 2$ the right-hand side equals 0, and for $i = 3, 4$ the right-hand side equals 1.  Thus the index $-j^*$ must be contained in exactly \textit{two} of the sets $B_{k_1}, \ldots , B_{k_4}$, contradicting the fact that they are disjoint.  Thus $p$ contains no balanced shifted copy of any 4-set $A \subseteq S$.   \end{proof}

We can now explain our strategy (following \cite{Sun}) to prove Theorem~\ref{ubound}: we build a pattern $p \in \{0, 1, *\}^N$ with `small' domain, so that $bs_1(f^{\mathcal{T}, p})$ is small by Lemma~\ref{1bound}.  We choose $p$ such that for \textit{any} `sufficiently large' $S \subseteq \bZ_N$, $p$ contains a balanced shifted copy of some 4-set $A \subseteq S$; this will bound $bs_0(f^{\mathcal{T}, p})$ by Lemma~\ref{0bound}.

Our pattern $p$ will have all of its 0/1 entries on $\{0, 1, \ldots, 2K - 2\}$, where $K = K_N < N/2$ is a parameter.  In this we are following~\cite{Sun}, but with some further optimization in our setting of $K$.  The key properties we need in $p$ are provided by the following Lemma:

\begin{mylem}\label{coolset} For sufficiently large $K$, there is a pattern $p$ with $\dom(p) \subseteq \{0, 1, \ldots, 2K - 2\}$ which contains a shifted balanced copy of every 4-set $A \subseteq \{0, 1, \ldots, K - 1\}$, and such that $|\dom(p)| \leq 3 K^{3/4}\ln^{1/4} K$.
\end{mylem}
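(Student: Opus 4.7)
The plan is to construct $p$ by the probabilistic method and apply the Janson--Suen inequality (Theorem~\ref{SuenJan}) to ensure that with positive probability $p$ simultaneously has small domain and contains a balanced shifted copy of every 4-set $A \subseteq \{0,\ldots,K-1\}$. Specifically, I would independently set, for each $i \in \{0,\ldots,2K-2\}$, $p_i = *$ with probability $1 - q$, and otherwise $p_i \in \{0,1\}$ uniformly, where $q := c K^{-1/4}\ln^{1/4} K$ for a constant $c$ to be tuned. Then $\bE[|\dom(p)|] = (2K-1)q$, and a Chernoff bound makes the event $|\dom(p)| \leq 3 K^{3/4}\ln^{1/4} K$ hold with probability tending to $1$, for suitable $c$.

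For each 4-set $A = \{a_1,\ldots,a_4\} \subseteq \{0,\ldots,K-1\}$, let $J(A)$ denote the integer shifts $j$ with $A - j \subseteq \{0,\ldots,2K-2\}$; a quick count gives $|J(A)| \geq K$. For $j \in J(A)$ let $X_{A,j}$ indicate that $(p_{a_1 - j},\ldots,p_{a_4 - j})$ is balanced, i.e., takes value $0$ on exactly two coordinates and $1$ on the other two. Then $\bE[X_{A,j}] = \binom{4}{2}(q/2)^4 = 3q^4/8$, so
\[
\mu_A := \bE\Big[\sum_{j \in J(A)} X_{A,j}\Big] \;\geq\; \frac{3 K q^4}{8} \;=\; \frac{3 c^4}{8}\ln K.
\]

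I would then apply Theorem~\ref{SuenJan} to $\{X_{A,j}\}_{j \in J(A)}$ using the dependency graph $X_{A,j} \sim X_{A,j'}$ iff their underlying coordinate sets $\{a_i - j\}_i$ and $\{a_i - j'\}_i$ overlap, equivalently iff $j - j' \in (A - A)\setminus\{0\}$. Each vertex has at most $|A-A|-1 \leq 12$ neighbors, hence $\delta \leq 12 \cdot 3q^4/8 = O(q^4) = o(1)$. For $\Delta$, any two neighboring indicators share $k \in \{1,2,3\}$ coordinates and jointly depend on only $8-k$ coordinates of $p$, so $\bE[X_{A,j} X_{A,j'}] = O(q^{8-k})$; summing over the $O(K)$ neighboring pairs and using $k \leq 3$ yields $\Delta = O(K q^5) = O(c^5 K^{-1/4}\ln^{5/4} K) = o(1)$. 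Plugging into Theorem~\ref{SuenJan} gives
\[
\Pr[Y_A = 0] \;\leq\; \exp\!\big(-\mu_A + \Delta e^{2\delta}\big) \;\leq\; K^{-\,3c^4/8\, +\, o(1)},
\]
which is smaller than $K^{-4}$ once $c$ is taken large enough. A union bound over the fewer than $K^4$ choices of $A$, combined with the concentration event for $|\dom(p)|$, then yields by the probabilistic method an explicit $p$ satisfying both conclusions of the lemma.

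The main technical challenge is verifying that $\delta$ and $\Delta$ are genuinely $o(1)$, so that Theorem~\ref{SuenJan} delivers essentially $e^{-\mu_A}$ rather than the weaker $e^{-\mu_A/2}$ one would get from a second-moment argument (or Janson's original inequality without Suen's reformulation). This factor-of-two saving in the exponent is precisely what buys us $\ln^{1/4} K$ rather than $\ln K$ in the domain size---which is in turn the source of the $\ln^{1/7} N$ improvement in Theorem~\ref{ubound}. A secondary subtlety will be to balance the two demands on the constant $c$ (small enough so the Chernoff event controls $|\dom(p)|$, large enough so $\mu_A$ dominates the union bound); the slack in the $o(1)$ error terms and the ability to optimize the $0$-vs.-$1$ bias of $p_i$ should leave enough room to meet the stated constant of $3$.
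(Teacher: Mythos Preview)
Your approach is essentially identical to the paper's: random $p$ with each coordinate of $\{0,\ldots,2K-2\}$ independently set to $*/0/1$ at rate $\Theta((\ln K/K)^{1/4})$, Theorem~\ref{SuenJan} applied to the shift-indicators for each fixed 4-set (with the same dependency graph), a union bound over the $<K^4$ many 4-sets, and a tail bound on $|\dom(p)|$ (you use Chernoff, the paper uses Markov; your ``$\le 12$ neighbors'' is the correct count, the paper's ``$3$ others'' is a harmless slip).

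One caveat on your last paragraph: the two demands on $c$ are in fact incompatible---the domain bound needs $2c<3$ while the union bound needs $3c^4/8>4$, i.e.\ $c>(32/3)^{1/4}\approx 1.81$---and biasing the $0/1$ coin only \emph{decreases} the probability of a balanced coloring, so there is no slack to recover the stated constant~$3$. The paper shares this slip (it drops a factor of $2$ when computing $\bE[|\dom(p)|]$); the construction genuinely gives $|\dom(p)|=O(K^{3/4}\ln^{1/4}K)$ with a somewhat larger constant, which is all that Theorem~\ref{ubound} needs.
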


Note that the `sufficiently large' requirement in Lemma~\ref{coolset} is independent of $N$.
This Lemma resembles ~\cite[Lemma 2]{Sun}, but uses a different construction and improves its parameters.  Sun defined a pattern $p$ by randomly assigning 0/1 values to a collection of translates of an explicit set; by contrast, we use a fully probabilistic construction.
We defer the proof of Lemma~\ref{coolset}.

\begin{proof}[Proof of Theorem~\ref{ubound}]  Set $K := \lceil \frac{N^{4/7}}{\ln^{1/7} N} \rceil$.  Fix a $p$ as guaranteed by Lemma~\ref{coolset} (for each sufficiently large $N$).  Let $f^{\mathcal{T}, p}$ be the corresponding minterm-cyclic pattern-matching problem.  First, by Lemma \ref{1bound}, 
$$
bs_1(f^{\mathcal{T}, p}) \leq 3 K^{3/4}(\ln K)^{1/4}
$$
$$
 = O\left(    \frac{N^{\frac{4}{7} \cdot \frac{3}{4}}}{ (\ln N)^{\frac{1}{7} \cdot \frac{3}{4}} } \cdot     (\ln N)^{1/4}   \right)       =  O\left(N^{3/7} \ln^{1/7} N \right),
$$
since $\frac{1}{4} - \frac{3}{28} = \frac{1}{7}$.

To upper-bound $bs_0(f^{\mathcal{T}, p})$, let $S \subseteq \bZ_N$ be any set of size at least $4N^{3/7}\ln^{1/7} N \geq \frac{4N}{K}$.  Following~\cite{Sun}, if we pick an interval $[a, a + K - 1]$ (mod $N$) by choosing $a \in \bZ_N$ uniformly, the expected number of elements of $S$ in the interval is at least $K\cdot \frac{4N/K}{N}  = 4$; therefore there exists some such interval which contains at least 4 elements of $S$.  Let $A \subseteq S$ be these 4 elements; since $A$ lie in an interval of length $K$, Lemma \ref{coolset} tells us that $p$ contains a balanced shifted copy of $A$.  

As $S$ was an arbitrary set of size $\geq 4N^{3/7}\ln^{1/7} N$, it follows from Lemma \ref{0bound} that $bs_0(f^{\mathcal{T}, p}) < 4N^{3/7}\ln^{1/7} N$, and hence that $bs(f^{\mathcal{T}, p}) = O(N^{3/7}\ln^{1/7} N)$.  This proves Theorem \ref{ubound}.  
\end{proof}

 \begin{proof}[Proof of Lemma~\ref{coolset}] We construct $p$ as follows: for each $0 \leq i \leq 2K - 2$, we independently set $p_i$, where for $b \in \{0, 1\}$ we have $\Pr[p_i = b] = (\frac{\ln K}{K})^{1/4}$; with the remaining probability we set $p_i = *$.


Now we prepare to apply Theorem \ref{SuenJan} from Section~\ref{probineqsec}.  Fix any 4-set $A \subseteq \{0, 1, \ldots , K - 1\}$.  For $0 \leq i \leq K - 1$, let $I_i$ be the event that $A + i$ is contained in the domain of $p$ and receives a balanced coloring by $p$ (note that $A + i \subseteq \{0, 1, \ldots 2K - 2\}$).  We will use Theorem \ref{SuenJan} to upper-bound the probability that $\sum_{i \in \mathcal{I}}I_i = 0$; then we will simply take a union bound over all possible choices of $A$.  We define $I_i \sim I_j$ to hold iff $i \neq j$ and $(A + i) \cap (A + j) \neq \emptyset$.  Note that this defines a valid dependency graph since $p$ is chosen according to a product measure.

Let us compute $\mu$ for our family of random variables.  Note that each translate $A + i$ can be given a balanced coloring by $p$ in ${4 \choose 2} = 6$ ways, and that each such coloring has probability $((\frac{\ln K}{K})^{1/4})^4 = \frac{\ln K}{ K}$.  Thus $q_i = \frac{6 \ln K}{K}$ and $\mu = 6\ln K$.

Now we bound $\delta$ and $\Delta$.  Note that each translate $A + i$ overlaps with at most 3 others, so that we clearly have $\delta = o(1)$.  Also, for each pair $A + i, A + j$ of overlapping translates, there are certainly no more than ${4 \choose 2}^2/2 = 18$ colorings of $(A + i) \cup (A + j)$ that make both translates balanced, and the probability of each one occurring is at most $((\frac{\ln K}{K})^{1/4})^5$ (since $|(A + i) \cup (A + j)| \geq 5$).  There are at $O(K)$ such pairs; thus,
$$
\Delta = \sum_{\{i, j\}: i \sim j} \bE[I_i I_j]   \leq O\left(K \cdot \frac{\ln^{5/4}K}{K^{5/4}}\right) =   o(1).
$$
Theorem~\ref{SuenJan} then tells us that $\Pr[\sum_i I_i = 0] \leq e^{-6 \ln K + o(1)} = (1 + o(1))K^{-6}$.  This is less than $K^{-4}$ for large enough $K$.

There are ${K \choose 4} < K^4/24$ 4-sets $A \subseteq \{0, 1, \ldots, K - 1\}$, so for large enough $K$, the probability that $p$ fails to contain a balanced shifted copy of \textit{any} such $A$ is, by a union bound, at most $1/24$.  Also, the expected domain size of $p$ is $2K \cdot (\frac{\ln K}{K})^{1/4} = 2K^{3/4} \ln^{1/4} K$.  Using Markov's inequality, the probability that $|\dom(p)| > 3K^{3/4} \ln^{1/4} K$ is less than $2/3$.  By a union bound we find that with nonzero probability, $p$ contains a balanced shifted copy of each 4-set $A \subseteq \{0, 1, \ldots, K - 1\}$, and simultaneously $|\dom(p)| \leq 3K^{3/4} \ln^{1/4}K$.  This proves Lemma~\ref{coolset} (and completes the proof of Theorem \ref{ubound}). 
\end{proof}

\section{Open Problems}

It seems natural to wonder if the parameters in Lemma~\ref{coolset} can be improved further to remove the log factor entirely.  (If so, we suspect a non-probabilistic approach is needed.)  This would yield a tight bound of $\Theta(N^{3/7})$ for the minimum achievable block sensitivity for nonconstant minterm-transitive functions.  

More broadly, we still hope for a better understanding of the sensitivity and block sensitivity of general transitively invariant functions.  The premier open problem in this area is whether for such functions $s(f) = N^{\Omega(1)}$; it is unsolved even for the special case of cyclically invariant functions.

\end{document}